\newtheorem{rem}{\textbf{Remark}}
\newtheorem{lemma}{\textbf{Lemma}}
\newtheorem{thm}{\textbf{Theorem}}
\newtheorem{defi}{\textbf{Definition}}
\newcommand{\sat}{\mathbf{Sat}}
\newcommand{\cut}{\mathbf{Co}}
\newcommand{\bd}{\mathbf{BlkDiag}}
\DeclareMathAlphabet{\pazocal}{OMS}{zplm}{m}{n}
\newcommand{\minimize}{\mbox{minimize}}
\newcommand{\tf}[1]{\mathbf{#1}}
\newcommand{\st}{\mbox{subject to}}
\newcommand{\R}{\mathbb{R}}
\newcommand{\cC}{\pazocal{C}}
\newcommand{\SFpair}{\left\{\Phix,\Phiu\right\}}
\newcommand{\Phix}{\tf \Phi_x}
\newcommand{\Phiu}{\tf \Phi_u}
\newcommand{\bPhix}{\tf{ \bar{\Phi}_x}}
\newcommand{\bPhiu}{\tf{ \bar{\Phi}_u}}
\newcommand{\Phixt}{\Phi_x}
\newcommand{\Phiut}{\Phi_u}
\title{\LARGE \bf
System Level Synthesis with State and Input Constraints
}
\author{Yuxiao Chen and James Anderson
\thanks{Yuxiao Chen is with the Department of Mechanical and Civil Engineering, Caltech,
        Pasadena, CA, 91106, USA. Emails:
        {\tt\small chenyx@caltech.edu}}
\thanks{James Anderson is with the Computing and Mathematical Sciences Department, Caltech,
         Pasadena, CA, 91106, USA. Emails:
        {\tt\small james@caltech.edu}}
}
\begin{document}
\maketitle
\thispagestyle{empty}
\pagestyle{empty}

\begin{abstract}
This paper addresses the problem of designing distributed controllers with state and input constraints in the System Level Synthesis (SLS) framework. Using robust optimization, we show how state and actuation constraints can be incorporated into the SLS structure. Moreover, we show that the dual variable associated with the constraint has the same sparsity pattern as the SLS parametrization, and therefore the computation distributes using a simple primal-dual algorithm. We provide a stability analysis for SLS design with input saturation under the Internal Model Control (IMC) framework. We show that the closed-loop system with saturation is stable if the controller has a gain less than one. In addition, a saturation compensation scheme that incorporates the saturation information is proposed which improves the naive SLS design under saturation.
\end{abstract}

\section{Introduction}
System Level Synthesis (SLS) is a recently developed framework for formulating optimal distributed control problems. Traditional methods seek to minimize the closed-loop map from exogenous disturbance to error, formally they try to solve
\begin{align*}
\underset{\tf K} {\minimize} \quad & \|f(\tf P, \tf K)\| \\
\st \quad & \tf K~ \text{stabilizes }\tf{P}, \quad \tf K \in \cC,
\end{align*}
where $f(\tf P, \tf K)\ = \tf P_{11} + \tf P_{12}\tf K(I-\tf P_{22}\tf K)^{-1}\tf P_{21}$. The distributed nature of the problem is encapsulated in the information constraint $\tf K \in \cC$. The sub-space  $\cC$ encodes for sparsity of the controller structure as well as various delays incurred through communication, actuation, and sensing. The seminal work of Rotkowitz and and Lall~\cite{RotL05} characterizes when this problem has a convex solution. In later work the conditions for convexity in~\cite{LesL11} were shown to be both necessary and sufficient.

In comparison, the SLS framework~\cite{anderson2019system} does not consider the map $\bar {\tf z} = f(\tf P,\tf K)\tf w$. Instead, it takes into account the system state and optimizes the closed-loop map from disturbance to state and disturbance to control action:
\begin{align} \label{eq:sysresp}
\begin{bmatrix} \tf x \\ \tf u \end{bmatrix} = \begin{bmatrix} \Phix \\ \Phiu \end{bmatrix}\tf w,
\end{align}
where
\begin{align*}
\Phix = (zI-A -B\tf K)^{-1},
\Phiu = \tf K(zI-A -B\tf K)^{-1}.
\end{align*}
One of the central results of SLS is the ability to design the \emph{system response} pair $\SFpair$ using convex programming (see Section \ref{sec:SLSback} for details). In this work we show how constraints on the state and control action can be incorporated into the SLS formulation. In related work~\cite{DeaTMR18}, Dean et al  consider the problem of \emph{safely learning} LQR dynamics which incorporate system constraints in an SLS setting.

\subsection{Preliminaries}
In this section we provide the necessary background material on SLS distributed control.

\subsubsection{Nomenclature} $\mathbb{R}^n$ denotes the $n$-dimensional Euclidean space. Transfer matrices and signals are denoted in bold, constant matrices and vectors are not. Matlab notation is used to denote the vertical concatenation of two objects of appropriate dimension, e.g., $[A ; B] = [A^\intercal~B^\intercal]^\intercal$.

$\pazocal{P}(P,q)=\left\{x\mid Px\le q\right\}$ denotes the polyhedron defined with $P$ and $q$. $\bd(A_1,A_2,\hdots ,A_k)$ represents the block diagonal matrix with diagonal blocks $A_1,A_2,\hdots ,A_k$.
\subsubsection{System Level Synthesis}\label{sec:SLSback}
Here we provide a very brief overview of the system level synthesis framework for distributed control. This paper is intended to be self contained, but the reader is referred to~\cite{anderson2019system, WanMD18} for  a more complete picture of both theory and computation.
We consider a linear time-invariant plant model of the form
$x(k+1) = Ax(k) + Bu(k) + w(k)$,
where $x(k), w(k) \in \R^{n}$ are the state and noise vectors at time $k$, and $u(k) \in \R^m$ is the control action at time $k$. The control synthesis problem is to design a dynamic state-feedback policy $\tf u = \tf K \tf x$. The system level synthesis approach to control synthesis is based on the notion of designing the closed-loop system responses $\SFpair$ (by choice of $\tf K$). Any stable and strictly-proper transfer matrices $\SFpair$ that satisfy the affine expression
\begin{equation}\label{eq:affine}
\begin{bmatrix} zI - A & -B\end{bmatrix} \begin{bmatrix} \Phix \\ \Phiu \end{bmatrix}=I
\end{equation}
can be used to construct an internally stabilzing controller $\tf K = \Phiu \Phix^{-1}$. In this work we find it more convenient to work in the time-domain. We thus work with a convolutional representation of the system response given by
\begin{equation*}
x(k) = \sum_{t=1}^{k+1}\Phixt[t]w(k-t), \quad u(k) =  \sum_{t=1}^{k+1}\Phiut[t]w(k-t).
\end{equation*}
The relationship between $\Phix,\Phiu$ and $\Phixt [1],\hdots, $and $\Phiut [1],\hdots$ is given through the spectral decomposition of a transfer matrix: $\Phix = \sum_{k=0}^{\infty} \Phixt[k]z^{-k}$, $\Phiu = \sum_{k=0}^{\infty} \Phiut[k]z^{-k}$. Note that~\eqref{eq:affine} imposes the constraint $\Phixt[1] = I$.

\section{SLS with State and Input Constraint}
In this section, we consider the SLS synthesis problem with state and input constraint.

The system dynamics are modeled by a discrete time linear model:
\begin{equation}\label{eq:dyn}
  x(k+1)=Ax(k)+Bu(k)+w(k),
\end{equation}
as described in Section~\ref{sec:SLSback}. We assume that a bound on exogenous disturbance $w_k$ is known and is provided as part of the problem specification. The bounds we use will be time invariant, therefore we drop time dependencies from our notation when describing these constraints. Let the disturbance constraint be modeled as $Gw\le g$, where $G\in\mathbb{R}^{q\times n}$, $g\in\mathbb{R}^q$ specify a polytopic bound on $w$.  Under this assumption, our goal is to design a distributed controller such that the state $x$ and control input $u$ are bounded for all time. To write this compactly, we use $H[x;u]\le h$ to denote the state and input bound, where $H\in\mathbb{R}^{p\times (n+m)}$, $h\in\mathbb{R}^p$. Although, the constraint is assumed to be time invariant,  the proposed method can be easily extended to the time varying formulation. The state bound may come from the specification for the design, whereast the input bound usually comes from actuator saturation. Furthermore, locality (in space and time) constraints can be easily incorporated into the synthesis procedure. Informally, spatial locality is encoded via sparsity constraints on the matrices $\Phixt[i], \Phiut[i]$ for all $i$ and a finite impulse response (FIR) constraint sets $\Phixt[t]=0, \Phiut[t]=0$ for all $t>T$. The following lemma links the different components and constraints together.
\begin{lemma}
  For a closed-loop stable FIR system with horizon $T$, if the closed-loop response is described by~\eqref{eq:sysresp}, then the following statements are equivalent:
  \begin{enumerate}
    \item For all allowable $w$, for all $t\ge0$, $H[x(t);u(t)]\le h$
    \item For all allowable $w(0:T-1)$,
    \begin{equation*}\label{eq:cons_phi}
      H\begin{bmatrix}
         \sum\nolimits_{i = 0}^{T-1} {{\Phi _x}[T - i ]} w(i) \\
         \sum\nolimits_{i = 0}^{T-1} {{\Phi _u}[T - i ]} w(i)
       \end{bmatrix}  \le h,
    \end{equation*}
    where  $w(0:T-1)$ denotes $w(0), \hdots, w(T-1)$.
  \end{enumerate}
\end{lemma}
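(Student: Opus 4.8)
The plan is to expand the finite convolution defining $x(t),u(t)$, use the FIR horizon ($\Phi_x[\tau]=\Phi_u[\tau]=0$ for $\tau>T$) together with the fact that the disturbance bound $Gw\le g$ is the \emph{same} at every time step, and show that the infinite family of constraints in (1) collapses to the single constraint in (2). The organizing observation is that, by the FIR property, $[x(t);u(t)]$ depends only on the length-$T$ block $w(t-T),\dots,w(t-1)$, and for $t=T$ this block is exactly $w(0),\dots,w(T-1)$; substituting this into the convolution shows that the bracketed vector in (2) is precisely $[x(T);u(T)]$. So the whole lemma is about matching disturbance windows with allowable sequences $w(0:T-1)$.

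For $(1)\Rightarrow(2)$ I would simply instantiate (1) at $t=T$. Every allowable $w(0:T-1)$ can be completed to an allowable full disturbance sequence, and the completion is irrelevant since $x(T),u(T)$ do not depend on $w(k)$ for $k\ge T$ or $k<0$; hence (1) forces the inequality in (2), which already gives one direction.

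For $(2)\Rightarrow(1)$, fix an allowable $w$ and a time $t$, and split on $t\ge T$ versus $t<T$. When $t\ge T$, define the shifted window $\tilde w(j):=w(j+t-T)$ for $j=0,\dots,T-1$; this is an allowable sequence because $w$ is, and re-indexing the convolution shows the bracketed vector in (2) evaluated at $\tilde w$ equals $[x(t);u(t)]$, so (2) yields $H[x(t);u(t)]\le h$. This is the conceptual core: time-invariance of the bound is exactly what lets one slide the window. When $t<T$ only a partial window is active; here one pads the window, so that the coefficients $\Phi_x[T],\dots,\Phi_x[t+1]$ land on the padded entries, which again reproduces $[x(t);u(t)]$ as an instance of the vector in (2).

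The step I expect to be the main obstacle is the transient regime $t<T$: making the padding legitimate requires that the padded disturbance values are themselves allowable, i.e. that the origin lies in $\{w\mid Gw\le g\}$ (equivalently, that one models the disturbance as active over all past times rather than starting from rest). This is the usual normalization of a disturbance set and should be recorded as a hypothesis; without it the reduction can fail, since an early, partial-window state need not be reproducible by any full-length allowable window. The only other care needed is bookkeeping — keeping the off-by-one convolution indexing consistent with the constraint $\Phi_x[1]=I$, and checking that the map between ``disturbances acting at time $t$'' and ``allowable $w(0:T-1)$'' is a genuine bijection in the $t\ge T$ case.
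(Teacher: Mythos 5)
Your proposal is correct and follows essentially the same route as the paper's (very terse) proof: necessity by instantiating at $t=T$, sufficiency by exploiting time-invariance of the bounds and the FIR horizon to slide the disturbance window. You go further than the paper in flagging that the transient regime $t<T$ needs $0\in\pazocal{P}(G,g)$ (or disturbances acting over all past time) for the padding to be legitimate — a hypothesis the paper leaves implicit — but this refinement does not change the argument's structure.
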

\begin{proof}
  It is trivial to see that 2) is a necessary condition to 1) by taking $t=T$; For sufficiency, notice that the constraint on $w$, $x$ and $u$ are time invariant, and the system response is FIR with horizon $T$.
\end{proof}
%Note that $\Phi_x$ and $\Phi_u$ are treated equivalently, therefore define $\Phi=[\Phi_x;\Phi_u]$ as the closed-loop response, and the synthesis problem is then formulated as
Using the lemma above, we can now state the state- and input-constrained distributed SLS problem:
\begin{equation}\label{eq:original_problem}
\begin{aligned}
  \mathop {\min }\limits_{{\Phix},{\Phiu}} &J({\Phix},{\Phiu})\\
 \rm{s.t.}\; &\begin{bmatrix}
    Iz-A & -B
  \end{bmatrix}
  \begin{bmatrix}
    \Phix \\
    \Phiu
  \end{bmatrix}=I,\\
  &\Phix,\Phiu\in\frac{1}{z}\pazocal{RH}_\infty \cap \pazocal{S}\\
 % & \widehat{G}w(0:T-1)\le \hat{g}   \\
  &        \forall w \in \pazocal{P}(\widehat{G},\hat g),~H\begin{bmatrix}
  \sum\nolimits_{i = 0}^{T-1} {{\Phi}[T - i ]} w(i) \end{bmatrix}\le h
  \end{aligned}
\end{equation}
where $\widehat{G}=I_T\otimes G$, $\hat{g}=\mathbf{1}_T \otimes g$ with $\otimes$ denoting the Kronecker product. The set $\pazocal{S}$ encodes the FIR constraint and spatial locality constraint, see~\cite{WanMD18} for details. We refer to~\eqref{eq:original_problem} as the robust SLS problem.

The cost functional $J$ is chosen to be convex in  $\Phix$ and $\Phiu$, a typical choice would be the $\pazocal{H}_2$-norm. This is a robust convex optimization problem in the sense that the last constraint is required to hold over  all possible disturbances $w$. We use the dualization method that converts this robust optimization into a tractable convex program.
\begin{lemma}\label{lemma:Robust_optimization}
  Consider the following robust optimization problem:
  \begin{equation}\label{eq:robust_opt}
    \begin{aligned}
  \mathop {\min }\limits_\alpha\;\;&J(\alpha)  \\
  \rm{s.t.} \;&\forall \beta\in\left\{\beta\mid G\beta\le g\right\},\\
   &H_1\beta  + \alpha^\intercal H_2  \beta  + H_3\alpha  \le h, \hfill \\
\end{aligned}
  \end{equation}
  where $\alpha$ is the decision vector, $J(\cdot)$ is a convex cost function, $\beta$ is the uncertain variable, and $G \beta\le g$ is the bound for uncertainty. $H_{1,2,3}$ are constant matrices of appropriate dimensions. The robust optimization problem \eqref{eq:robust_opt} is equivalent to the following convex program:
  \begin{equation}\label{eq:dual_robost_opt}
    \begin{aligned}
  \mathop {\min }\limits_{\alpha ,\lambda } \;&J(\alpha) \; \\
  \rm{s.t.}\;&H_3\alpha  + \lambda g \le h \hfill \\
  &H_1 + \alpha^\intercal H_2 = \;\lambda G \hfill \\
  &\lambda \ge 0, \hfill \\
\end{aligned}
  \end{equation}
  where $\lambda$ is the dual variable.
\end{lemma}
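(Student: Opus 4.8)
The plan is to reformulate the inner robust feasibility condition --- "$H_1\beta + \alpha^\intercal H_2 \beta + H_3\alpha \le h$ for all $\beta$ with $G\beta \le g$" --- row by row as a collection of linear programs in $\beta$, and then replace each of those LPs by its dual. Fix the decision vector $\alpha$ and consider the $j$-th row of the constraint. Writing $c_j(\alpha)^\intercal := (H_1)_{j,:} + \alpha^\intercal (H_2)_{j}$ (where $(H_2)_j$ denotes the appropriate block of $H_2$ acting on $\beta$) and $d_j(\alpha) := (H_3\alpha)_j$, the $j$-th row holds for all admissible $\beta$ if and only if
\begin{equation*}
\max_{\beta:\, G\beta \le g}\; c_j(\alpha)^\intercal \beta \;\le\; h_j - d_j(\alpha).
\end{equation*}
The left-hand side is the optimal value of a linear program in $\beta$; provided the uncertainty set $\{\beta \mid G\beta \le g\}$ is nonempty (and the program is feasible), LP strong duality gives
\begin{equation*}
\max_{\beta:\, G\beta \le g}\; c_j(\alpha)^\intercal \beta \;=\; \min_{\lambda_j \ge 0,\; \lambda_j^\intercal G = c_j(\alpha)^\intercal}\; \lambda_j^\intercal g,
\end{equation*}
where $\lambda_j$ is the dual vector associated with the $j$-th row.

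Next I would combine the two displays: the $j$-th robust row is satisfiable if and only if there exists $\lambda_j \ge 0$ with $\lambda_j^\intercal G = c_j(\alpha)^\intercal$ and $\lambda_j^\intercal g \le h_j - d_j(\alpha)$. Indeed, if such a $\lambda_j$ exists then by weak duality the LP value is $\le \lambda_j^\intercal g \le h_j - d_j(\alpha)$, so the row holds; conversely, if the row holds, the dual optimum is attained (the dual is feasible and bounded below, since its value equals the finite primal value) at some $\lambda_j \ge 0$ achieving $\lambda_j^\intercal g \le h_j - d_j(\alpha)$. Stacking the rows $j = 1,\dots,p$ into the matrix $\lambda$ whose $j$-th row is $\lambda_j^\intercal$, the conditions $\lambda_j^\intercal G = c_j(\alpha)^\intercal$ become $\lambda G = H_1 + \alpha^\intercal H_2$, the conditions $\lambda_j^\intercal g \le h_j - d_j(\alpha)$ become $\lambda g \le h - H_3\alpha$, and $\lambda_j \ge 0$ for all $j$ becomes $\lambda \ge 0$ --- exactly the constraints of~\eqref{eq:dual_robost_opt}. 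Since the objective $J(\alpha)$ is untouched and the feasible $\alpha$-sets of the two problems coincide, the two programs are equivalent, and convexity of~\eqref{eq:dual_robost_opt} is immediate because its constraints are linear in $(\alpha,\lambda)$ and $J$ is convex by assumption.

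The main obstacle is the strong-duality step: it requires the inner LP in $\beta$ to be feasible and to have finite optimal value, which is why one implicitly assumes the uncertainty polytope $\{\beta \mid G\beta \le g\}$ is nonempty (and, for the "only if" direction, that the robust constraint is actually achievable so the LP value is not $+\infty$). I would state this nonemptiness as a standing assumption --- it is benign, since an empty disturbance set makes the robust problem vacuous --- and then invoke LP strong duality (equivalently, Farkas' lemma / LP duality with no duality gap for feasible bounded programs). The remaining steps are bookkeeping: verifying that the row-wise dual variables assemble into the single matrix $\lambda$ with the claimed dimensions, and checking the direction of inequalities is preserved. Note also that $\alpha^\intercal H_2 \beta$ is bilinear in $(\alpha,\beta)$ but \emph{linear in $\beta$ for fixed $\alpha$}, which is all that is needed for the LP reformulation; the product $\alpha^\intercal H_2$ appears as a constant coefficient vector in the dual, so no nonconvexity is introduced.
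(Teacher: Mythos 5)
Your proof is correct and is precisely the standard row-wise LP-duality derivation of the robust counterpart; the paper itself gives no proof of this lemma, simply citing the general theory in \cite{ben2015deriving}, whose argument is the one you reproduce. The only substantive hypothesis is the strong-duality step (nonemptiness of the uncertainty polytope, and boundedness of the inner maximization whenever the robust constraint holds), and you identify and handle it correctly.
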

\begin{proof}
  This is a specific instantiation of the general theory developed in \cite{ben2015deriving}, see the proof therein.
\end{proof}

For notational convenience let $\tf \Phi  = [\Phix; \Phiu]$. Applying Lemma \ref{lemma:Robust_optimization} to the robust SLS synthesis problem \eqref{eq:original_problem}, the robust optimization is transformed into the following convex optimization:
\begin{equation}\label{eq:conv_problem}
  \begin{aligned}
  \mathop {\min }\limits_{{\Phix},{\Phiu},\Lambda\ge0} &J({\Phix},{\Phiu})&\\
 \rm{s.t.}\; &\begin{bmatrix}
    Iz-A & -B
  \end{bmatrix}
  \begin{bmatrix}
    \Phix \\
    \Phiu
  \end{bmatrix}=I&~~~~(a)\\
  &\Phix,\Phiu\in\frac{1}{z}\pazocal{RH}_\infty \cap \pazocal{S}&~~~~(b)\\
  & H\Phi[k]  = \Lambda[k]\widehat{G},\forall k=1,...,T&~~~~(c)\\
       &\Lambda\hat{g}\le h&~~~~(d)
  \end{aligned}
\end{equation}
In this setting, the variable $\lambda$ from~\eqref{eq:dual_robost_opt} is now $T$ matrices $\Lambda[1:T],\Lambda[i]\in\mathbb{R}^{p\times q}$ and the inequality is applied in an element-wise manner.  One issue with this approach is that it requires an additional variable $\Lambda$ of dimension $\mathbb{R}^{p\times q\times T}$, which can be large. We will show that when the state space is decomposed into patches, and the state and input constraints are uncoupled, $\Lambda$ has the same sparsity pattern as the SLS parametrization $\Phi_x$ and $\Phi_u$.

Suppose the state is decomposed into $k$ patches: $x=[x_1;x_2;\hdots x_k]$, where $x_i\in\mathbb{R}^{n_i}$ are of different dimensions and $\sum_{i=1}^{k}{n_i}=n$. Accordingly, $w$ is decomposed in the same pattern $w=[w_1;w_2;\hdots w_k]$.
%\begin{ass}\label{eq:B_decomp}
%  It is assumed that under the state decomposition $x=[x_1;x_2;\hdots x_k]$, $B$ matrix in \eqref{eq:dyn} is a block diagonal matrix, i.e., there exists a partition of the input $u=[u_1;u_2;\hdots u_k]$, and let $B_{x_i,u_j}$ denote the portion of $B$ corresponding to $x_i$ and $u_j$, then $\forall i\neq j, B_{x_i,u_j}=0$.
%\end{ass}

\begin{defi}
  The closed loop performance constraint $\pazocal{P}(H,h)$ is decoupled if $H$ is block diagonal under decomposition $x=[x_1;x_2;\hdots x_k]$, $u=[u_1;u_2;\hdots u_k]$. The bound $\pazocal{P}(G,g)$ on exogenous disturbance $w$ is decoupled if $G$ is block diagonal under decomposition $w=[w_1;w_2;\hdots w_k]$.
\end{defi}
\begin{thm}\label{thm:lambda_sparsity}
  If the performance constraint $\pazocal{P}(H,h)$ and the disturbance constraint $\pazocal{P}(G,g)$ are both decoupled, $\Lambda$ has the same sparsity pattern as $\Phi_x$ and $\Phi_u$.
\end{thm}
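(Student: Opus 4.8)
The plan is to read the sparsity of $\Lambda$ straight off constraint $(c)$ of \eqref{eq:conv_problem} once the block‑diagonal structure of the data is made explicit. First I would record that structure. Partition $[x;u]$ as $[x_1;u_1;\dots;x_k;u_k]$ and $w$ as $[w_1;\dots;w_k]$; by hypothesis $H=\bd(H_1,\dots,H_k)$ and $G=\bd(G_1,\dots,G_k)$, so $\widehat G=I_T\otimes G$ is block diagonal with the same patch structure on the disturbance coordinates. Write $\Phi_{ij}[k]$ and $\Lambda_{ij}[k]$ for the $(i,j)$ patch‑blocks of $\Phi[k]$ and of $\Lambda[k]$ (the rows of $\Lambda$ being conformably partitioned with $H$ and its columns with $G$). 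Multiplying out, the equality $(c)$ decouples completely across patches into the independent block equations
\[
 H_i\,\Phi_{ij}[k] \;=\; \Lambda_{ij}[k]\,G_j, \qquad i,j\in\{1,\dots,k\},\quad k=1,\dots,T,
\]
whose content is that the $(i,j)$ block of $\Lambda[k]$ is tied to $\Phi$ only through the $(i,j)$ block of $\Phi[k]$.

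From here the sparsity is immediate. The spatial‑locality part of $\pazocal S$ is, by definition, a set of index pairs $(i,j)$ (possibly depending on $k$) for which $\Phi_{ij}[k]=0$ is imposed on $\Phix$ and $\Phiu$. For every such $(i,j,k)$ the block equation above becomes $\Lambda_{ij}[k]\,G_j=0$, so $\Lambda_{ij}[k]=0$ is an admissible value. I would therefore take any feasible (in particular optimal) tuple of \eqref{eq:conv_problem} and replace $\Lambda$ by the tensor $\Lambda'$ obtained by zeroing out exactly the blocks $\Lambda_{ij}[k]$ with $\Phi_{ij}[k]=0$ and leaving all other blocks unchanged. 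By construction $\Lambda'$ has the same sparsity pattern as $\Phi=[\Phix;\Phiu]$, and it still satisfies $(c)$: on the untouched blocks trivially, and on the zeroed blocks because both sides of the block equation vanish.

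It remains to check that the swap $\Lambda\to\Lambda'$ preserves the remaining constraints on the dual variable, namely $\Lambda\ge 0$ and $(d)$. Nonnegativity is clear, since only nonnegative entries were set to zero, so $0\le\Lambda'\le\Lambda$ entrywise. For $(d)$, note $\Lambda\hat g=\sum_{k=1}^T\Lambda[k]g$, and — assuming, as is standard for a disturbance bound, that $0\in\pazocal P(G,g)$, i.e.\ $g\ge 0$, hence $\hat g=\mathbf{1}_T\otimes g\ge 0$ — the inequality $0\le\Lambda'\le\Lambda$ gives $\Lambda'\hat g\le\Lambda\hat g\le h$. Thus $(\Phi,\Lambda')$ is feasible with the same cost, which is the claim. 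The step I would be most careful about is precisely this last one: the block equations $\Lambda_{ij}[k]G_j=0$ do \emph{not} by themselves force $\Lambda_{ij}[k]=0$ — when $\pazocal P(G_j,g_j)$ is bounded, $G_j$ has a nontrivial left null space — so the theorem is really the statement that some optimal $\Lambda$ has the stated sparsity, and the substantive work is verifying that the zeroing‑out remains feasible for $(d)$.
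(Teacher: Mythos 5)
Your proposal is correct and follows essentially the same route as the paper: exploit the block-diagonal structure of $H$ and $\widehat G$ to decouple constraint $(c)$ into per-patch equations $H_i\Phi^{i,j}[t]=\lambda^{i,j}[t]G_j$, then zero out the $\Lambda$-blocks corresponding to zero blocks of $\tf\Phi$ and check that feasibility of $(d)$ is preserved. The only (minor) divergence is in that last check: you invoke $0\in\pazocal P(G,g)$ so that $\hat g\ge 0$, whereas the paper gets $\lambda^{i,j}[t]g_j\ge\lambda^{i,j}[t]G_jw=0$ from $\lambda\ge 0$ and mere nonemptiness of the disturbance polytope --- a slightly weaker hypothesis --- and your explicit remark that the theorem asserts existence of \emph{some} optimal $\Lambda$ with the stated sparsity (since $\Lambda_{ij}[k]G_j=0$ alone does not force $\Lambda_{ij}[k]=0$) is a correct and worthwhile clarification of what the paper leaves implicit.
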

\begin{proof}
  By assumption, we can write
  \begin{align*}
  H &=\bd(H_1,H_2,\hdots H_k), \\  G& =\bd(G_1,G_2,\hdots G_k).
  \end{align*}
   Take $t=1$ as an example. The equality constraint for $\Phi[1]$ and $\Lambda[1]$ can be written in blocks:
  \begin{equation*}\label{eq:eq_cons}
  \resizebox{.9\hsize}{!}{$
    \begin{bmatrix}
      H_1\Phi^{1,1}[1] & \cdots & H_1\Phi^{1,k}[1] \\
      \vdots & \ddots &\vdots\\
      H_k\Phi^{k,1}[1] &\cdots & H_k\Phi^{k,k}[1] \\
    \end{bmatrix}
    =\begin{bmatrix}
      \lambda^{1,1}[1]G_1 &\cdots & \lambda^{1,k}[1]G_k \\
      \vdots  & \ddots &\vdots\\
      \lambda^{k,1}[1]G_1 &\cdots & \lambda^{k,k}[1]G_k \\
    \end{bmatrix},
    $}
  \end{equation*}
  which can be decomposed into blocks as
  \begin{equation*}\label{eq:block_eq_cons}
    \forall i,j\in\left\{1,\hdots ,k\right\},\forall t\in\left\{1,\hdots ,T\right\}, H_i\Phi^{i,j}[t] = \lambda^{i,j}[t]G_j.
  \end{equation*}

Likewise, the inequality constraint in block form is
\begin{equation}\label{eq:block_ineq_cons}
  \forall i = 1,\hdots ,k,\sum\nolimits_{t = 1}^T {\sum\nolimits_{j = 1}^k {{\lambda ^{i,j}}[t]{g_j} \le {h_i}} }.
\end{equation}
Due to the locality constraint $\{\Phix,\Phiu\}\in \pazocal{S}$, many blocks of $\mathbf{\Phi}$ will be zero. For those zero blocks $\Phi^{i,j}[t]$, the equality constraint becomes ${\lambda ^{i,j}}[t]{G_j} = 0$. Note that since $\lambda\ge0$,
\[{\lambda ^{i,j}}[t]{g_j} \ge {\lambda ^{i,j}}[t]{G_j}w = 0 \cdot w = 0\]
This means that for any feasible solution of the original optimization, we can change $\lambda^{i,j}[t]$ to zero and the solution would still be feasible. Besides, $\lambda$ does not appear in the cost function, therefore, we can simply set $\lambda ^{i,j}[t]=0$ for all $\Phi^{i,j}[t]=0$.
\end{proof}
In \cite{matni2017scalable}, the authors showed that the SLS constraints ((a),(b) in \eqref{eq:conv_problem}) are column-wise separable, therefore, when the objective function is separable, for example the $H_2$-norm, the whole synthesis can be decomposed into patches. Theorem \ref{thm:lambda_sparsity} showed that the equality constraint induced by the state and input constraint ((c) in \eqref{eq:conv_problem}) can be column-wise separated, and even inherits the sparsity pattern of $\Phi$. Unfortunately, the last inequality constraint ((d) in \eqref{eq:conv_problem}) is not column-wise separable. However, at this point, the patches are only coupled by several linear constraints and the coupling only exists between neighbors determined by the locality constraint. Therefore, one can still decentralize the synthesis in \eqref{eq:conv_problem} and resolve the coupling with a simple primal-dual algorithm. Let the dual variable associated with the inequality constraint be $\sigma\in\mathbb{R}^{1\times p}$, then the primal-dual update is as follows:
\begin{equation}\label{eq:primal_dual}
\begin{aligned}
  &\text{Primal update:}\\
  &\;\;\;\;{\begin{aligned}
  \mathop {\min }\limits_{{\Phix},{\Phiu},\Lambda\ge0} &J({\Phix},{\Phiu})+Tr[(\hat{g}\cdot \sigma)\Lambda]  \\
 \rm{s.t.}\; &\begin{bmatrix}
    Iz-A & -B
  \end{bmatrix}
  \begin{bmatrix}
    \Phix \\
    \Phiu
  \end{bmatrix}=I\\
  &\Phix,\Phiu\in\frac{1}{z}\pazocal{RH}_\infty \cap \pazocal{S}\\
  & H\begin{bmatrix}
         {\Phi}[T] & \hdots &{\Phi}[1]
       \end{bmatrix}  = \Lambda\widehat{G}
  \end{aligned}}\\
  &\text{Dual update:}\\
  &\;\;\;\;\sigma = \max(0,\sigma+ \alpha(\Lambda \hat{g}-h )),
\end{aligned}
\end{equation}
where $\alpha$ is the step size, and the stopping criteria is
\begin{equation}\label{eq:stop_criteria}
  (\Lambda\hat{g}\le h) \textbf{ and }  (\left|\sigma(\Lambda \hat{g}-h)\right|\le \epsilon),
\end{equation}
where $\epsilon$ is the tolerance. The first condition is on primal feasibility and the second condition is on complementary slackness. Note that the primal update is column-wise separable and the dual update only requires local communication between neighbors.

\begin{rem}
Compared to the invariant set method~\cite{kolmanovsky1998theory,rakovic2010parameterized,chen2018data,chen2018compositional}, that also guarantees closed-loop state bounds under a bounded disturbance signal, the SLS method proposed in this paper does not use the invariance condition of a set, but rather directly parameterize the closed-loop response. Therefore, the bound obtained is tight, as is shown in Section \ref{sec:sim}. In addition, the problem setup can be naturally extended to situations with time varying disturbance bounds. Such a setup is useful in situations when there is a nominal bound on disturbance but unexpected large disturbances may occur, but not frequently.
\end{rem}

%\begin{algorithm}[H]
%    \caption{Primal-dual algorithm for decentralized synthesis}
%    \label{alg:primal_dual_SLS}
%    \begin{algorithmic}[1] % The number tells where the line numbering should start
%            \State  $\sigma(0) \gets \mathbf{0}$, $k=0$
%            \Do
%                \For{$i$ in $1$ to $k$}
%                    \State Solve local synthesis, obtain $\Phi^{i,j}$ for $j\in\pazocal{N}_i$.
%                \EndFor
%                \State Update
%
%            \doWhile {11 }
%            \State \textbf{return} $\mathbf{u}^\star,\rho_s,V$
%
%    \end{algorithmic}
%\end{algorithm}

\section{Input Saturation}
Sometimes even with design that respects the input bound, the actuator saturates under unexpected large exogenous disturbance. In this section, we discuss the SLS design under input saturation.

\subsection{Stability of saturated SLS}

%Define the saturation operator as
%\begin{equation}\label{eq:sat_opt}
%  \sat_{[x_{\min},x_{\max}]} (x)= \left\{{\begin{array}{*{20}{l}}
%{{x_{\max }}}\\
%x\\
%{{x_{\min }}}
%\end{array}} \right.
%\end{equation}
%and the cutoff operator (cannot think of a better name) as
%\begin{equation}\label{eq:cutoff_opt}
%  \cut_{[x_{\min},x_{\max}]}(x)=\sat_{[x_{\min},x_{\max}]}(x)-x.
%\end{equation}
%
%The two operators $\sat$ and $\cut$ can be extended to vectors, where $x$, $x_{\min}$ and $x_{\max}$ are vectors of the same dimension.

We will show that the SLS controller resembles the internal model control (IMC) scheme discussed in \cite{doyle1987control,morari1989robust}, and via the existing result in IMC, we  prove stability of the system by saturating the internal model. Consider the dynamical system  \eqref{eq:dyn}. Suppose a stabilizing controller parameterized by $\Phix$ and $\Phiu$ is obtained via SLS synthesis. The proposed control structure is depicted in Fig. \ref{fig:block_diagram_sat}.

\begin{figure}[H]
  \centering
  \includegraphics[width=0.9\columnwidth]{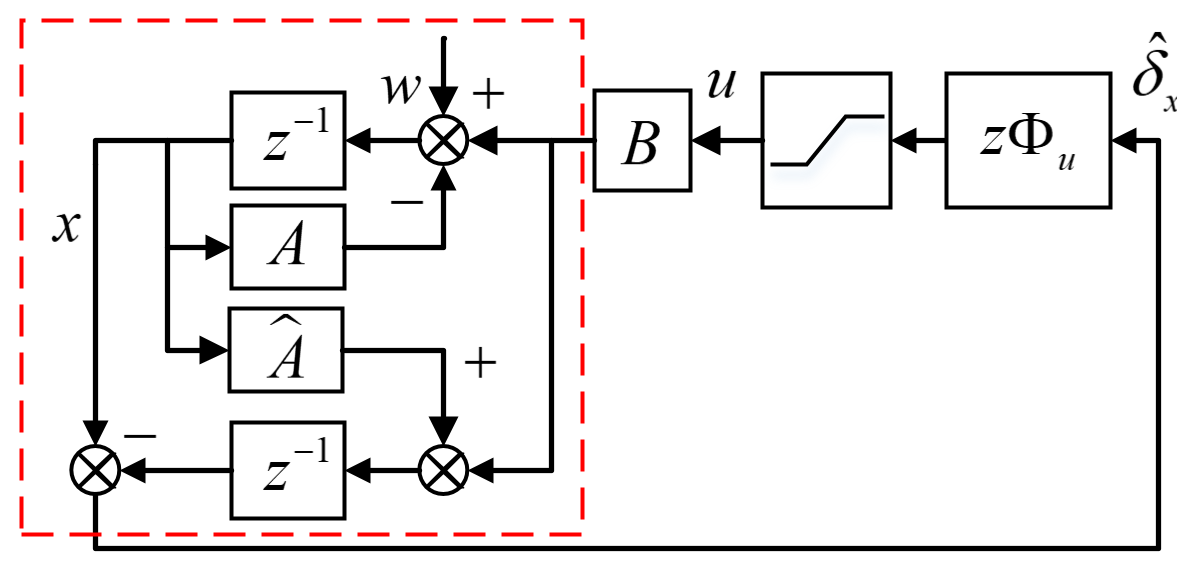}
  \caption{Block diagram with saturation.}\label{fig:block_diagram_sat}
\end{figure}

\begin{thm}\label{thm:stability}
 Assume that the dynamic system in \eqref{eq:dyn} is open-loop stable, then (i) when $\widehat A=A$ is a perfect model for the dynamics, the closed loop system depicted in Fig. \ref{fig:block_diagram_sat} is stable; (ii) when $\widehat A=A$ and the actuator is not saturated, the block diagram is a realization of the SLS controller; (iii) when $\widehat A\neq A$, but
 \begin{equation}\label{eq:imperfect_matching}
    {\left\| {(A - \hat A){{(Iz - A)}^{ - 1}}z{\Phi _u}} \right\|} < 1,
  \end{equation}
  the closed loop system is stable.
\end{thm}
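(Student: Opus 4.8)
The plan is to handle the three assertions separately: (ii) is an identity, (i) hinges on a cancellation that linearizes the internal-model loop, and (iii) is a small-gain argument for the loop that contains the saturation.

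\emph{Part (ii).} I would propagate signals around Fig.~\ref{fig:block_diagram_sat} in the $z$-domain, setting $\widehat A = A$ and taking the saturation inactive. With a perfect model the internal-model branch reproduces the disturbance exactly: subtracting the model recursion $\widehat x(k+1)=A\widehat x(k)+Bu(k)$ from the plant~\eqref{eq:dyn} gives $\mathbf x-\widehat{\mathbf x}=(zI-A)^{-1}\mathbf w$, so the signal entering the controller block is precisely the quantity on which $\Phi_u$ acts in~\eqref{eq:sysresp}. Combining $\mathbf u=\Phi_u\mathbf w$ with $\mathbf x=(zI-A)^{-1}(B\mathbf u+\mathbf w)$ and the affine identity $(zI-A)\Phi_x-B\Phi_u=I$ from~\eqref{eq:affine} then yields $\mathbf x=\Phi_x\mathbf w$, i.e.\ the diagram realizes $\mathbf K=\Phi_u\Phi_x^{-1}$. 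This is routine, and it also pins down the transfer matrix sitting in the controller block.

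\emph{Part (i).} The key point is that the \emph{same} saturated input drives both the true plant~\eqref{eq:dyn} and the internal model, so when $\widehat A=A$ the mismatch $\mathbf e:=\mathbf x-\widehat{\mathbf x}$ obeys the \emph{linear} recursion $e(k+1)=Ae(k)+w(k)$: the $B\,\mathrm{sat}(\cdot)$ terms cancel identically. Since $A$ is Schur (open-loop stability) and $w$ is bounded, $\mathbf e=(zI-A)^{-1}\mathbf w$ is bounded. The controller acts on $\mathbf e$ through a stable (FIR) map built from $\Phi_u$, so its output is bounded; the saturation can only shrink it, so the applied control is bounded; feeding this bounded control into the Schur matrix $\widehat A$ bounds $\widehat{\mathbf x}$, and hence $\mathbf x=\widehat{\mathbf x}+\mathbf e$ is bounded. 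Every internal signal is thus a bounded image of $\mathbf w$, which is the claimed stability; and if $w\to0$ the same cascade forces all signals to their equilibria.

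\emph{Part (iii).} Now the cancellation fails: subtracting the recursions leaves an extra term $(A-\widehat A)\widehat{\mathbf x}$ (equivalently $(A-\widehat A)\mathbf x$), so the reconstructed disturbance is corrupted by a signal proportional to the parametric error $A-\widehat A$ times the plant state. Propagating this corruption through the controller $\Phi_u$ and the plant resolvent $(zI-A)^{-1}$ exhibits the closed loop as the feedback interconnection of a \emph{stable} LTI operator---precisely the one whose gain must satisfy~\eqref{eq:imperfect_matching}---with the static map $\mathrm{sat}$. Since $\mathrm{sat}$ is non-expansive (unit induced gain, also incrementally), the small-gain theorem gives well-posedness and stability as soon as the LTI factor has induced norm strictly below one; setting $\widehat A=A$ recovers part (i). I expect the main obstacle to be twofold: (a) reading the loop operator off Fig.~\ref{fig:block_diagram_sat} carefully enough that the quantity to be bounded is exactly $\|(A-\widehat A)(zI-A)^{-1}z\Phi_u\|$ rather than merely a cyclic rearrangement or a submultiplicative over-bound of it (this is where the factor $z$, the placement of $\Phi_u$, and the input matrix must be tracked precisely); and (b) the usual care with well-posedness and the choice of signal space for a loop containing a nonlinearity---one must check that the saturation's unit gain composes with the LTI induced norm in the way the small-gain theorem requires. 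The remaining steps are bookkeeping.
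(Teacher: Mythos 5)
Your proposal takes essentially the same route as the paper: parts (i) and (iii) rest on the IMC/small-gain observation that the model-mismatch path $(A-\widehat A)(Iz-A)^{-1}$ has gain zero when $\widehat A=A$ and that the saturation is non-expansive, and part (ii) is block-diagram algebra driven by the affine constraint \eqref{eq:affine}. The only differences are cosmetic --- the paper proves (ii) by reducing the controller block to $B\Phiu\Phix^{-1}$ rather than verifying $\tf x=\Phix\tf w$ directly, and its part (iii) is a one-line appeal to the small-gain theorem, so your explicit worry about the placement of $B$ and the non-commutativity of the loop factors is, if anything, more careful than the published argument.
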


\begin{proof}
  We prove stability using a small gain theorem argument. The transfer function from $Bu$ to $\hat{\delta}_x$ (the red block) in Fig. \ref{fig:block_diagram_sat} is $(A-\widehat{A})(Iz-A)^{-1}$. Clearly, when $\widehat{A}=A$ and the system is open-loop stable, the gain is zero. By constraint (b) of \eqref{eq:conv_problem}, $z\Phiu\in\pazocal{RH}_\infty$ which implies stability, therefore the overall loop gain is less than 1, and the system is stable. (iii) can be proved directly with small gain theorem.  

  For the second claim, first notice that without saturation, the closed-loop system becomes Fig. \ref{fig:block_diagram_no_sat} after some block diagram manipulation.
  \begin{figure}[H]
    \centering
    \includegraphics[width=0.9\columnwidth]{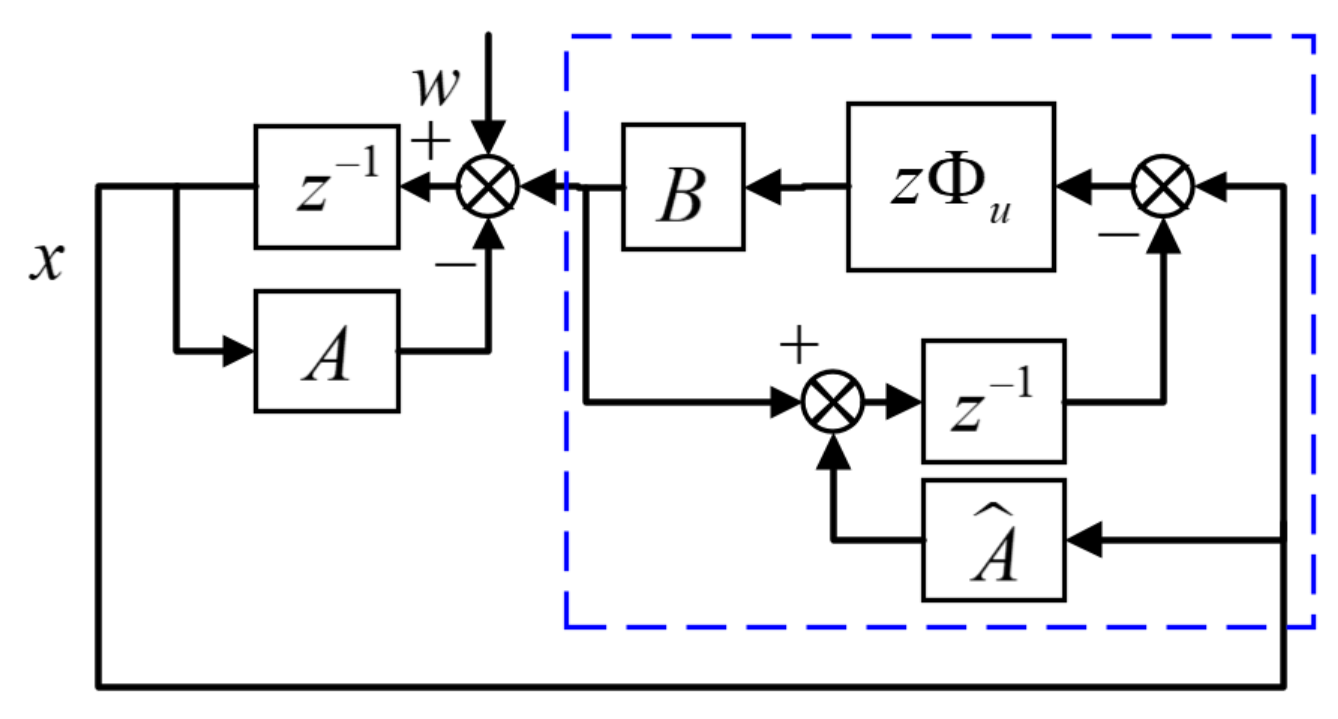}
    \caption{Block diagram without saturation.}\label{fig:block_diagram_no_sat}
  \end{figure}
  It is sufficient to show that the blocks within the blue square are equivalent to $B{\Phiu}\Phix^{ - 1}$. To see this, note that $\hat{A}=A$, and from the block diagram we have
  \begin{equation}\label{eq:block_relation}
  \begin{aligned}
    z{\Phiu}(x - {z^{ - 1}}(\widehat Ax + Bu)) &= u\\
   \Rightarrow u+\Phiu B u &=\Phiu(Iz-A)x.
    \end{aligned}
  \end{equation}
  Taking the SLS constraint (a) in \eqref{eq:conv_problem} and pre-multiplying by $(Iz-A)^{-1}$ and post-multiplying by $\Phix^{-1}$ we obtain
  \begin{equation}\label{eq:SLS_eq_derived}
    \begin{aligned}
B{\Phiu} &= (Iz - A){\Phix} - I&\;\;\;(a)\\
{(Iz - A)^{ - 1}}B{\Phiu} &= {\Phix} - {(Iz - A)^{ - 1}}&\;\;\;(b)\\
Iz - A - \Phix^{ - 1} &= B{\Phiu}\Phix^{ - 1}&\;\;\;(c)
\end{aligned}
  \end{equation}
 respectively.  Utilizing these expressions, and starting by multiplying~\eqref{eq:block_relation} on both sides by $B$, the derivation follows as
%  \begin{equation}\label{eq:SLS_derivation}
%    \begin{aligned}
%&\text{From \eqref{eq:block_relation}:}\;\;&Bu + B{\Phi _u}Bu &= B{\Phi _u}(Iz - A)x\\
%&\text{By (a) in \eqref{eq:SLS_eq_derived}:}\;\;&(Iz - A){\Phi _x}Bu &= B{\Phi _u}(Iz - A)x\\
%&\text{multiplying $(Iz-A)^{-1}$:}\;\;&{\Phi _x}Bu &= {(Iz - A)^{ - 1}}B{\Phi _u}(Iz - A)x\\
%&\text{By (b) in \eqref{eq:SLS_eq_derived}:}\;\;&{\Phi _x}Bu &= ({\Phi _x} - {(Iz - A)^{ - 1}})(Iz - A)x\\
%&\text{multiplying $\Phi_x^{-1}$:}\;\;&Bu &= (Iz - A - \Phi _x^{ - 1})x\\
%&\text{By (c) in \eqref{eq:SLS_eq_derived}:}\;\;&Bu &= B{\Phi _u}\Phi _x^{ - 1}x
%\end{aligned}
%  \end{equation}
 \begin{equation*}
 Bu + B{\Phiu}Bu = B{\Phiu}(Iz - A)x,
 \end{equation*}
 apply (a) to the LHS and rearrange to get
 \begin{equation*}
 {\Phix}Bu = {(Iz - A)^{ - 1}}B{\Phiu}(Iz - A)x.
 \end{equation*}
 Applying (b) to the RHS gives
 \begin{align*}
 {\Phix}Bu &= ({\Phix} - {(Iz - A)^{ - 1}})(Iz - A)x\\
 &\Rightarrow Bu = (Iz - A - \Phix^{ - 1})x,
 \end{align*}
 and finally, applying (c) to the RHS gives
 \begin{equation*}
 Bu = B{\Phiu}\Phix^{ - 1},
 \end{equation*}
which proves that the transfer function from $x$ to $Bu$ is $B{\Phiu}\Phix^{ - 1}$.
\end{proof}
\subsection{Compensation controller for actuator saturation}\label{sec:compensation}
Under this IMC design, when input saturation occurs, it is possible that a large transient would occur if the system is close to instability.
This is because the state deviation caused by saturation is not treated with the IMC controller since it is not an exogenous disturbance and therefore gets canceled by the internal model. One approach to fixing this is to treat the saturation as an additional source of disturbance, i.e., $\bar{w}=B\Delta u$, where $\Delta u$ is the part of input that gets cut off due to saturation. Since the local controller can detect the actuator saturation, the saturation status can be broadcast to the neighbors alongside state $x$. With this additional signal   we can design a compensation controller that deals with the ``disturbance'' caused by saturation. Let $\bPhix$ and $\bPhiu$ denote the closed-loop state and input response to $\bar{w}$, then we have
\begin{equation}\label{eq:aug_response}
  \begin{aligned}
  x&=\Phix w +\bPhix \bar{w}\\
  u&=\Phiu w +\bPhiu \bar{w},
  \end{aligned}
\end{equation}
where $\bar{w}$ is directly known to the controller. When designing $\bPhix$ and $\bPhiu$, an additional constraint should be added. For the saturated node $i$, the control capacity is zero, i.e., $\bPhiu(i,:)=0$. Since the control policy is localized, the $i$-th node would have to compute $N_i$ compensation controllers that correspond to $N_i$ different saturation scenarios, where $N_i$ is the number of neighbors for node $i$ (including itself).
\begin{rem}
We only consider the case where one node in a local patch is saturated since the compensation controller with one saturated node is guaranteed to have the same locality structure as the original SLS design. However, the compensation controller with more than one saturated node may not be feasible due to communication limitations. When multiple neighbors are saturated, we simply superimpose a compensation controller for each saturated node on top of each other.
\end{rem}
\begin{rem}
With the additional input constraint caused by actuation saturation, the synthesis of the compensation controller $\bPhiu$ may be infeasible, in which case we use the robust version of SLS introduced in \cite{matni2017scalable} to obtain a controller.
\end{rem}
With $\bPhix$ and $\bPhiu$ designed, the implementation of the compensation controller is shown in Fig. \ref{fig:comp_implement}.
\begin{figure}[H]
  \centering
  \includegraphics[width=0.9\columnwidth]{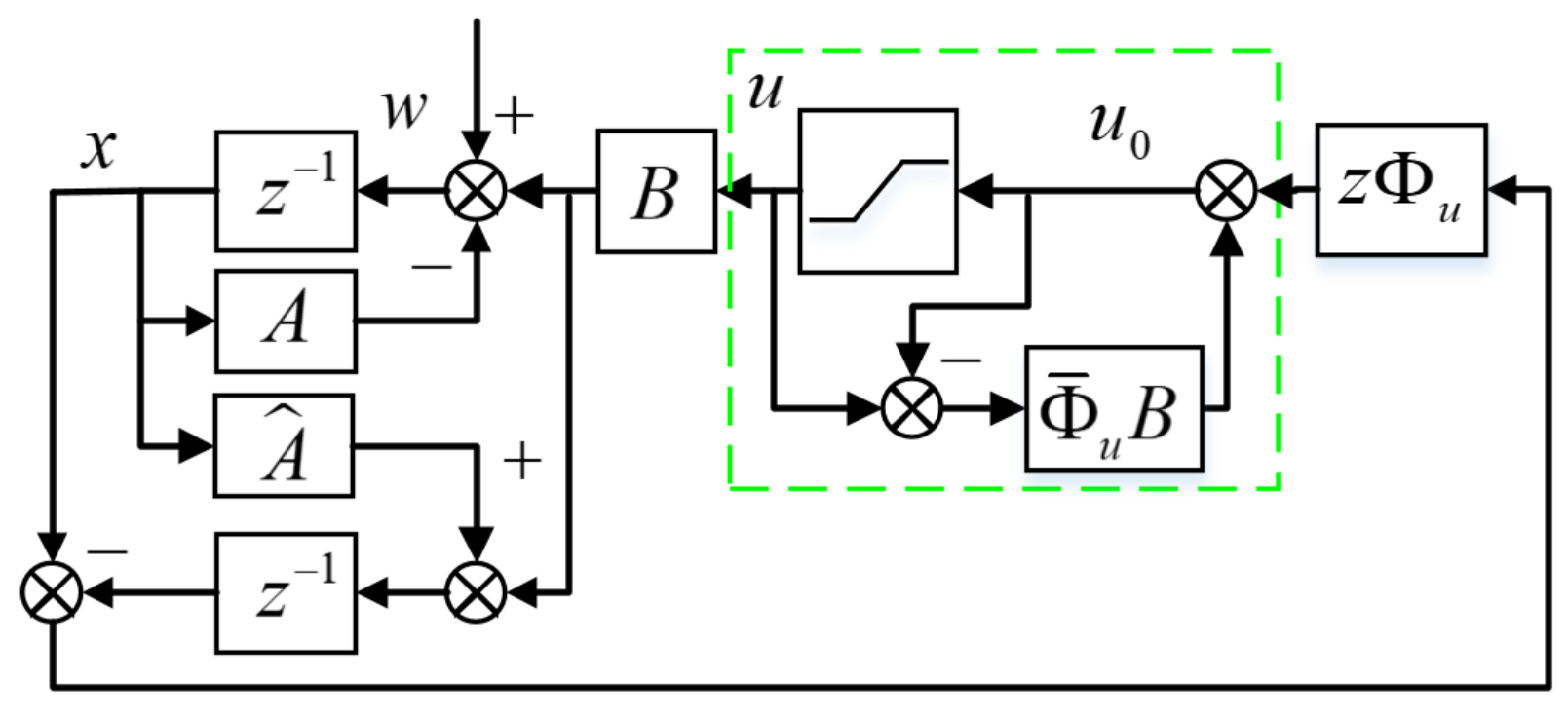}
  \caption{Closed-loop system with the compensation controller}\label{fig:comp_implement}
\end{figure}
For notational convenience, define the saturation operator as
\begin{equation*}
  \sat_{\pazocal{U}}(u)=\mathop {\arg \min }\limits_{\bar u\in\pazocal{U}} \left\| {u - \bar u} \right\|,
\end{equation*}
and the cut-off operator is defined as
\begin{equation*}
    \cut_{\pazocal{U}}(u)=u- \sat_{\pazocal{U}}(u).
  \end{equation*}
\begin{thm}
  When ${\left\| {z{\bPhiu}} \right\|}\le 1$ and the feasible input set $\pazocal{U}$ contains the origin, the block diagram in Fig. \ref{fig:comp_implement} is stable. Furthermore, when $\bPhiu = \Phiu$, the closed-loop system is equivalent to the original SLS controller with saturation.
\end{thm}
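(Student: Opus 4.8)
The plan is to follow the template of Theorem~\ref{thm:stability}: establish the stability claim by a small-gain argument on the interconnection of the saturation nonlinearity with the compensation response, and establish the equivalence claim by direct substitution into the augmented response~\eqref{eq:aug_response}.

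For the stability claim I would first perform the same block-diagram manipulation used in the proof of Theorem~\ref{thm:stability}, now on Fig.~\ref{fig:comp_implement}. Because the compensation controller carries an exact internal model of the plant, the plant--model mismatch path has zero gain (exactly as the red block $(A-\widehat A)(Iz-A)^{-1}$ did there), and after cancelling the common $(Iz-A)^{-1}$ dynamics with the SLS identities~\eqref{eq:SLS_eq_derived} rewritten for $\bPhix,\bPhiu$, the closed loop collapses to a single feedback interconnection: the commanded input is passed through the cut-off operator $\cut_{\pazocal{U}}$, the result drives the stable block $z\bPhiu$, and its output re-enters the commanded input. Establishing this reduction cleanly — in particular absorbing the input matrix $B$ appearing in $\bar w = B\Delta u$ and the one-step IMC delay so that the effective loop gain is governed by $\|z\bPhiu\|$, and invoking the design constraint that $\bPhiu$ has zero rows on the saturated channels — is the step I expect to be the main obstacle, since it is the nonlinear counterpart of the manipulations~\eqref{eq:block_relation}--\eqref{eq:SLS_eq_derived}.

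Given that reduction the remaining estimates are routine. Since $\pazocal{U}$ is convex and contains the origin, the Euclidean projection $\sat_{\pazocal{U}}$ is non-expansive and fixes the origin, hence so is $\cut_{\pazocal{U}}=I-\sat_{\pazocal{U}}$; in particular $\cut_{\pazocal{U}}$ has $\ell_2$-gain at most one. By constraint (b) of~\eqref{eq:conv_problem} the block $z\bPhiu$ lies in $\pazocal{RH}_\infty$, so it is a stable LTI map with $\ell_2$-gain $\|z\bPhiu\|\le 1$ by hypothesis, and $\bPhiu$ is strictly proper, which keeps the interconnection well-posed even at unit loop gain. The small-gain theorem then yields input--output stability of the loop, and combined with the internal stability of the SLS components this gives stability of the closed loop in Fig.~\ref{fig:comp_implement}.

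For the equivalence claim, note first that setting $\bPhiu=\Phiu$ already forces $\bPhix=\Phix$: both pairs satisfy the affine SLS constraint, so $(Iz-A)\bPhix = I+B\bPhiu = I+B\Phiu = (Iz-A)\Phix$ and hence $\bPhix=\Phix$. Substituting into~\eqref{eq:aug_response} gives $[x;u]=[\Phix;\Phiu](w+\bar w)$, with $\bar w$ the saturation-induced disturbance $B\,\cut_{\pazocal{U}}(u)$ (its sign being fixed by the convention for $\Delta u$). It then remains only to observe that $w+\bar w$ is exactly the effective disturbance driving the saturated SLS loop of Theorem~\ref{thm:stability}: there the plant receives $\sat_{\pazocal{U}}(u)=u-\cut_{\pazocal{U}}(u)$ rather than $u$, so its closed loop coincides with the nominal SLS closed loop driven by $w$ augmented by the saturation term, and with $\bPhiu=\Phiu$ the nominal and compensation branches of Fig.~\ref{fig:comp_implement} merge into precisely that single controller. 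Hence the two closed loops produce identical state and input trajectories, so Fig.~\ref{fig:comp_implement} with $\bPhiu=\Phiu$ collapses to the saturated SLS loop of Fig.~\ref{fig:block_diagram_sat}.
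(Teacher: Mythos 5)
Your proposal follows essentially the same route as the paper: stability is obtained by reducing Fig.~\ref{fig:comp_implement} (via the Theorem~\ref{thm:stability} manipulations) to a feedback loop between the cut-off operator and $z\bPhiu$ and applying small gain, and the equivalence is obtained by showing that with $\bPhiu=\Phiu$ both diagrams collapse to the same saturated SLS relation. If anything you are more careful than the paper on the gain of $\cut_{\pazocal{U}}$ — the paper asserts $\left\|\cut_{\pazocal{U}}(\cdot)\right\|<1$ outright, whereas the correct statement from non-expansiveness of the projection onto a convex set containing the origin is a gain of at most one, which you correctly note and patch with a well-posedness remark.
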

\begin{proof}
  The stability follows from small gain theorem. Based on the proof of Theorem \ref{thm:stability}, all we need to show is the stability of the blocks in the green square, which is a feedback loop consisting of the cut-off operator and $ z\bPhiu$. When $\pazocal{U}$ contains the origin, it is obvious that $\left\|\cut_{\pazocal{U}}(\cdot)\right\|<1$. Therefore, when $\left\|\bPhiu\right\|\le 1$, the compensation block is stable.

  The original SLS with saturation is shown in Fig. \ref{fig:orig_SLS_sat}.
  \begin{figure}[H]
    \centering
    \includegraphics[width=0.8\columnwidth]{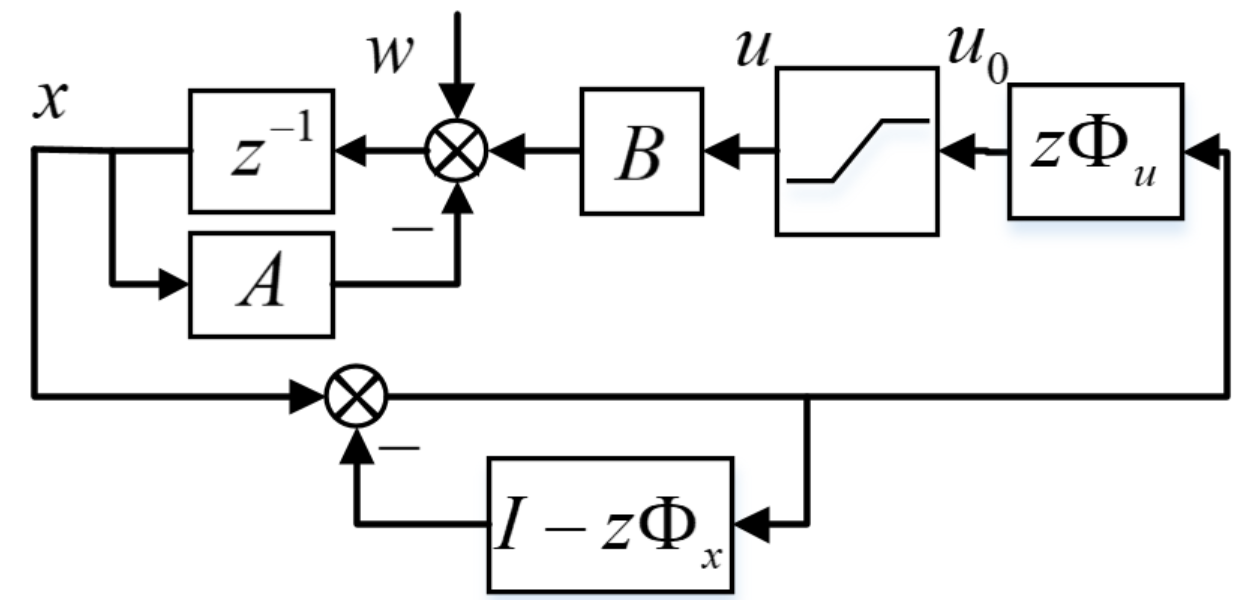}
    \caption{Closed-loop system under original SLS controller with saturation}\label{fig:orig_SLS_sat}
  \end{figure}
  The second claim can be proved by observing that when $\bPhiu=\Phiu$, the two block diagrams in Fig. \ref{fig:comp_implement} and Fig. \ref{fig:orig_SLS_sat} can be both simplified to
  \begin{equation}\label{eq:SLS_sat_relationship}
    B{\Phiu}Bu + B{\Phiu}w = (Iz - A){\Phix}B{u_0},
  \end{equation}
  where $u=\sat_{\pazocal{U}}(u_0)$.
\end{proof}
We note that the sparsity constraint $\mathbf{\bar{\Phi}}_u$ may seem restrictive, however, as we show in our numerical examples this doesn't appear to be the case. Moreover, should an infeasibility occur due to this constraint, we may apply results from robust SLS as described in~\cite{matni2017scalable} that relax the affine constraint (a) in~\ref{eq:conv_problem}.

 With $\bPhiu$ taking the control incapability caused by actuation saturation into account, we show in Section \ref{sec:sim} that it outperforms the original SLS controller.

\section{Numerical Example}\label{sec:sim}
We first show the simulation result of SLS design with state and input constraint. The model is the classic bidirectional chain system considered in \cite{matni2017scalable}. The dynamic equation is
\begin{equation}\label{eq:chain_dyn}
  x_i^ +  = \rho \left( {1 - {|\pazocal{N}_i|}\alpha } \right){x_i} + \rho\alpha \sum\nolimits_{j \in {\pazocal{N}_i}} {{x_j}}+u_i,
\end{equation}
where $x_i$ and $u_i$ are the state and control of node $i$, $\pazocal{N}_i$ denotes the set of neighbors of $x_i$ and $|\pazocal{N}_i|$ denotes the cardinality of the neighbor set. $\alpha$ and $\rho$ are parameters, we pick $\alpha=0.4$ and $\rho=1$, which makes the system neutrally stable.

 We assume an infinity norm bound on $w$, and the constraint on $x$ and $u$ is also in the infinity norm:
 \begin{equation*}
    \begin{aligned}
   \pazocal{P}(G,g)&:=\left\{ {w\mid {w_{\max }} \ge w \ge  - {w_{\max }}} \right\}\\
   \pazocal{P}(H,h)&:=\left\{ x,u\left| {\begin{array}{l}{x_{\max }} \ge x \ge  - {x_{\max }}\\
                                        {u_{\max }} \ge u \ge  - {u_{\max }}
                                        \end{array}} \right.\right\},\\
   \end{aligned}
 \end{equation*}
 where $x_{\max}\in\mathbb{R}^n$, $u_{\max}\in\mathbb{R}^m$ and $w_{\max}\in\mathbb{R}^n$ are the bound on state, input and disturbance.
 \begin{rem}
 The proposed framework naturally extends to the case with time varying bounds, for simplicity, we assume a constant bound for $w$ for all $t\ge0$.
\end{rem}

 For reasonable locality constraints and FIR horizon (horizon length $T=4$, locality radius $d=3$), the SLS synthesis is feasible. In addition, the decentralized primal-dual algorithm \eqref{eq:primal_dual} converges to the centralized synthesis solution. To validate the result, we first solve for the worst-case disturbance, i.e., $w_{0:T-1}\in\pazocal{P}(\hat{G},\hat{g})$ that is the worst case for one of the constraints, which is formulated as the following convex optimization problem:
\begin{equation}\label{eq:worst_w}
  \begin{aligned}
\mathop {\max }\limits_{w(0:T - 1)} &{H_i}\sum\nolimits_{t = 1}^T {{\Phi _x}[T - t]} w(t)\\
s.t.\;&\forall t = 0,\hdots ,T - 1,Gw(t) \le g,
\end{aligned}
\end{equation}
where $H_i$ is any row of $H$. Since \eqref{eq:worst_w} searches for the worst-case disturbance for one row of the constraint matrix, which can be either a state or input constraint, the result will depend on which constraint it tries to violate. As an example, we take the state constraint on one of the nodes in the chain system, and the worst-case disturbance is shown in Fig. \ref{fig:worst_case_w}. Because of the locality constraint, the worst-case disturbance is also local, i.e., only nonzero among neighbors of node $i$.
\begin{figure}
  \centering
  \includegraphics[width=0.8\columnwidth]{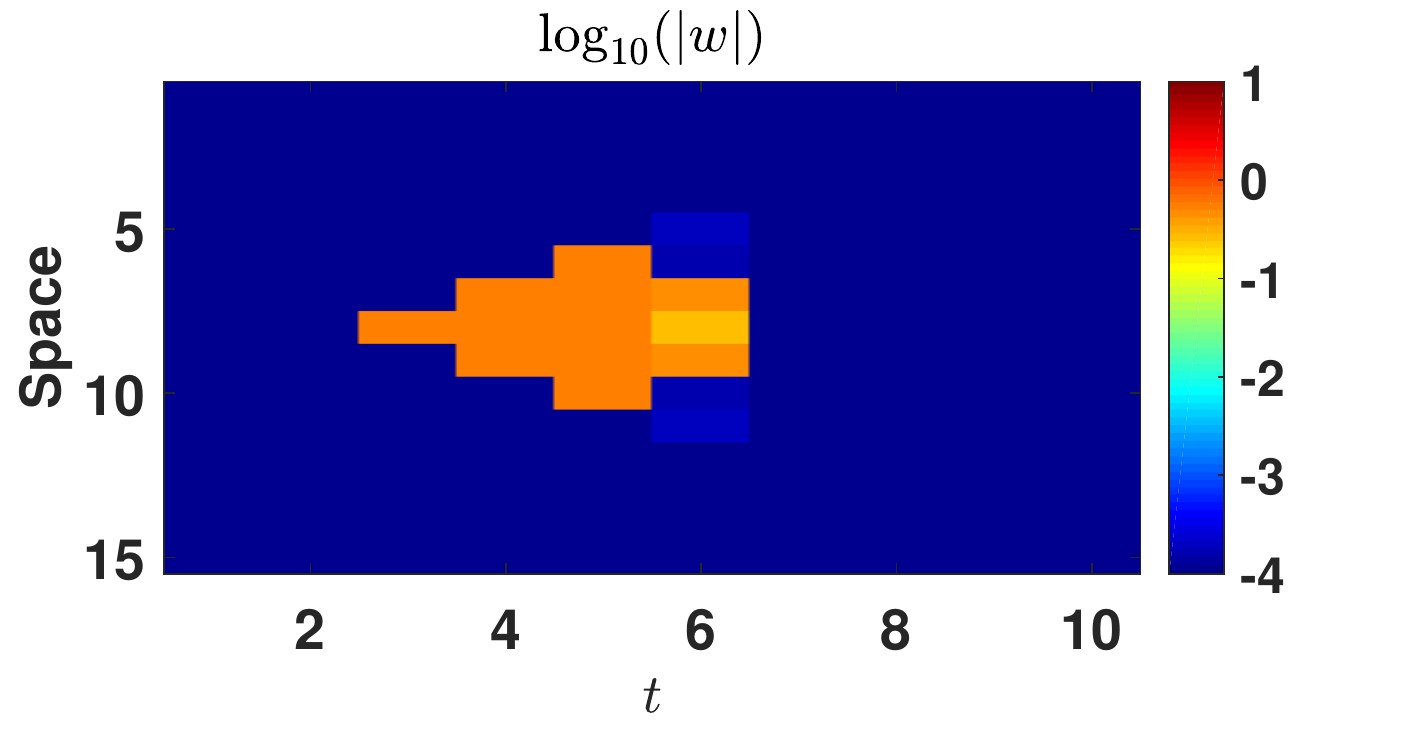}
  \caption{Worst-case disturbance.}\label{fig:worst_case_w}
\end{figure}
The closed loop response of $x_i$ is shown in Fig. \ref{fig:worst_case_xi}, and it stays within the bound specified as part of the SLS specification. Note that in general one cannot find a  worst-case disturbance  sequence  for multiple state and input constraints, therefore we simply pick one of the constraint and solve for the worst-case disturbance sequence.
\begin{figure}
  \centering
  \includegraphics[width=0.8\columnwidth]{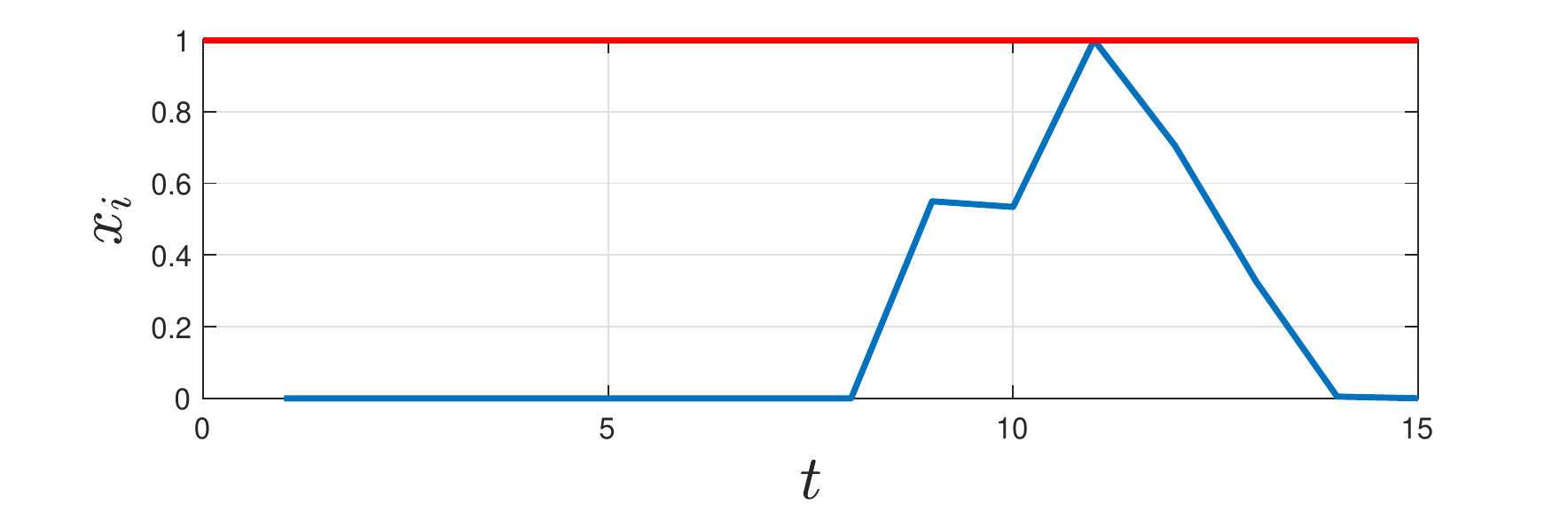}
  \caption{State response under the worst-case disturbance.}\label{fig:worst_case_xi}
\end{figure}

Next, we simulate the compensation controller. As discussed in Section \ref{sec:compensation}, the saturated IMC design may have large transient, and the IMC design with the original SLS controller $\Phiu$ as the compensation controller is equivalent to the naive SLS control with saturation. We show the comparison in Fig. \ref{fig:comp_comparison} between the naive SLS controller with saturation (top three plots) and SLS with saturation compensation, i.e., a compensation controller that assumes zero control capability for the saturated node (bottom three plots).
\begin{figure}[H]
  \centering
  \includegraphics[width=1\columnwidth]{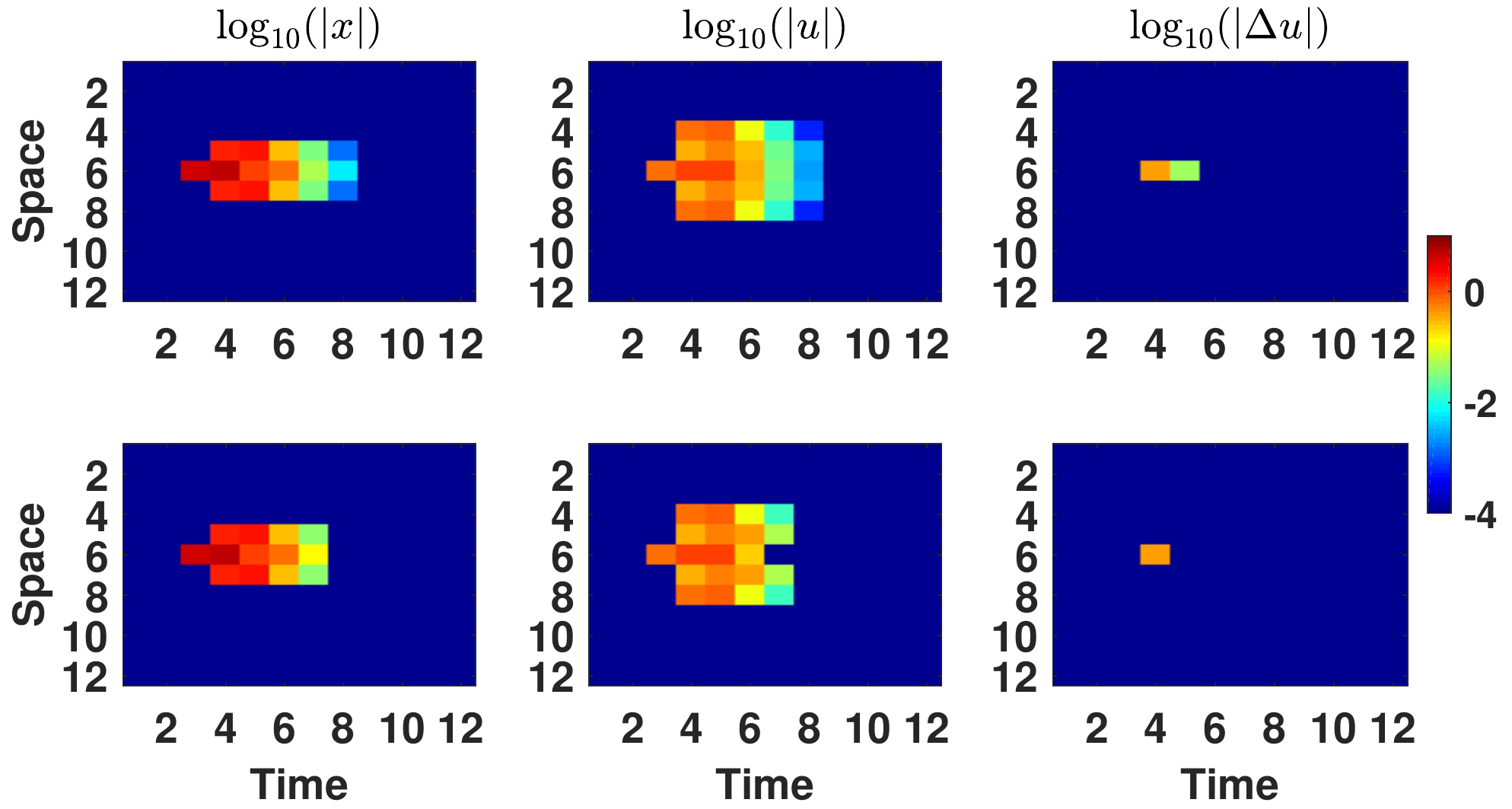}
  \caption{Simulation of naive SLS with saturation (top) and SLS with saturation compensation (bottom).}\label{fig:comp_comparison}
\end{figure}
The results show that with a compensation controller that incorporates the saturation information, the state and input converges to zero faster than the original SLS with saturation. 

\section{Conclusion}
We have described a method for incorporating state and input constraints into distributed control problems formulated using the system level synthesis framework. It was shown that we can use robust optimization to incorporate the constraints, and that the coupling constraints distribute when using a simple primal-dual algorithm. It was further shown that the closed-loop transient  dynamics can be improved upon by a slight modification to the controller (provided a simple gain condition is satisfied).

In future work we will consider the case of linear time-varying systems. The theory for SLS in this setting exists and we are currently mapping it to the constrained control setting. Such an approach will form the basis of a distributed model-predictive control scheme.

%This paper discusses the System Level Synthesis under state and input constraint. We first proposed a robust optimization approach for the System Level Synthesis with state and input constraints. Under the assumption of  a bounded exogenous disturbance signal, the state and input is guaranteed to satisfy the constraints for all time. We show that the computation of the robust SLS problem is tractable and that the  coupling inequality constraint distributes  easily through a primal-dual algorithm.

%It was then shown that  when saturation occurs, the naive SLS design is stable if the controller has a gain less than 1. Furthermore, a novel compensation scheme is proposed that incorporates the saturation information and outperforms the naive SLS design under saturation.
\section{Acknowledgements}
We gratefully acknowledge John C. Doyle for suggesting we look at  internal model control theory in connection with this problem.

\bibliographystyle{abbrv}
\bibliography{SLS_bib}

\end{document}